\newtheorem{theorem}{Theorem}
\newtheorem{proposition}[theorem]{Proposition}
\newcommand{\ol}{\overline}
\newcommand {\beq} {\begin{equation}}
\newcommand {\eeq} {\end{equation}}
\newcommand {\barr} {\begin{array}}
\newcommand {\earr} {\end{array}}
\newcommand {\bear} {\begin{eqnarray}}
\newcommand {\eear} {\end{eqnarray}}
\newcommand {\bears} {\begin{eqnarray*}}
\newcommand {\eears} {\end{eqnarray*}}
\title{Long-Term Energy Constraints and Power Control in Cognitive
Radio Networks}
\name{Fran\c cois M\'eriaux$^1$, Yezekael Hayel$^2$, Samson Lasaulce$^1$, Andrey Garnaev$^3$}
\address{$^1$L2S - CNRS - SUPELEC - Univ Paris-Sud\\
    F-91192 Gif-sur-Yvette, France\\
    \{meriaux,lasaulce\}@lss.supelec.fr \vspace{0.3cm}\\
    $^2$Lab. d'Informatique d'Avignon - Universit\'e d\textquotesingle Avignon\\
		84911 Avignon - France\\
		yezekael.hayel@univ-avignon.fr\vspace{0.3cm}\\
$^3$V.~I.Zubov Research Institute of Computational Mathematics $\&$ Control Processes \\
St Petersburg State University, Russia 198504\\
agarnaev@rambler.ru}
\begin{document}
\ninept
\maketitle

\begin{abstract}
When a long-term energy constraint is imposed to a transmitter, the average energy-efficiency of a transmitter is, in general, not maximized by always transmitting. In a cognitive radio context, this means that a secondary link can re-exploit the non-used time-slots. In the case where the secondary link is imposed to generate no interference on the primary link, a relevant issue is therefore to know the fraction of time-slots available to the secondary transmitter, depending on the system parameters. On the other hand, if the secondary transmitter is modeled as a selfish and free player choosing its power control policy to maximize its average energy-efficiency, resulting primary and secondary signals are not necessarily orthogonal and studying the corresponding Stackelberg game is relevant to know the outcome of this interactive situation in terms of power control policies.
%about 100 to 150 words, and should be identical to the abstract text submitted electronically

\end{abstract}

\begin{keywords}
Cognitive radio, Energy-efficiency, Power control, Primary user, Secondary user,
Stackelberg games.
\end{keywords}
%==========================================================

\section{Introduction}

One of the ideas of cognitive radio is to allow some wireless terminals, especially transmitters, to sense their environment in terms of used spectrum and to react to it dynamically. The cognitive radio paradigm~\cite{mitola-1999} has become more and more important to the wireless community since the release of the FCC report~\cite{fcc-report-2002}. Indeed, cognitive radio corresponds to a good way of tackling the crucial problem of spectrum congestion and increasing spectral efficiency. More recently, the main actors of the telecoms industry, namely carriers, manufacturers, and regulators have also realized the importance of energy aspects in wireless networks (see e.g.,~\cite{palicot-icwmc-2009}) both at the network infrastructure and mobile terminal sides. There are many reasons for this and we will not provide them here. As far as this paper is concerned, the goal is to study the influence of long-term energy constraints (e.g., the limited battery life typically) on power control in networks where cognitive radios are involved. The performance criterion which is considered for the terminal is derived from the one introduced by Goodman and Mandayam in~\cite{goodman-pc-2000}. Therein, the authors propose a distributed power control scheme for frequency non-selective block fading multiple access channels. For each block, a terminal aims at maximizing its individual energy-efficiency namely, the number of successfully decoded bits at the receiver per Joule consumed at the transmitter.  Although, a power control maximizing such a performance metric is called energy-efficient, it does not take into account possible long-term energy constraints. Indeed, in~\cite{goodman-pc-2000} and related references (e.g.,~\cite{meshkati-jsac-2006}\cite{lasaulce-twc-2009}), the terminals always transmit, which amounts to considering no constraints on the available (average) energy.  The goal of the present work is precisely to see how energy constraints modify power control policies in a single-user channel and in a cognitive radio channel.  For the sake of simplicity, time-slotted communications are assumed.

The paper is organized in two main parts. In Sec.~\ref{sec:pcs} a single-user channel is considered. It is shown that maximizing an average energy-efficiency under a long-term energy constraint leads the terminal to not transmit on certain blocks.  The probability that the terminal does not transmit is lower bounded. In a setting where a primary transmitter has to control its power under energy-constraint, this probability matters since it corresponds to the fraction of available time-slots which are re-exploitable by a secondary (cognitive) transmitter.  In Sec.~\ref{sec:pcs}, the single-user channel model is sufficient since the secondary link has to meet a zero interference constraint (it can only exploit non-used time-slots). In Sec.~\ref{sec:compet}, the secondary transmitter is assumed to be free to use all the time-slots. The technical difference between the primary and secondary transmitters is that the former has to choose its power level in the first place while the
  latter observes this level and react to it.  The suited interaction model is therefore a Stackelberg game~\cite{stackelberg-book-1934} where the primary and secondary transmitters are respectively the leader and follower of the game. Sec.~\ref{sec:numeric} provides numerical results which allow us to validate some derived results and compare the two cognitive settings (depending whether the secondary transmitter can generate non-orthogonal signals).

%The contributions of this papers are as follows. In Sec.~\ref{sec:pcs}, we introduce a wireless transmission model which performances are based on an energy-efficient metric~\cite{goodman-pc-2000} and which is limited by an energy constraint. When computing the optimal power control scheme in the single-user case, it occurs that the user must not transmit on every time-slots. Thus there are free time-slots available for an hypothtical second user. Then, we consider several scenarios for the two users case: In Sec.~\ref{sec:primary}, the first user, called primary user is prioritary and chooses its power control scheme without taking into account the secondary user and this latter may only transmit when the primary user does not. We show that it is possible to formulate a tight lower-bound to the probability for the primary user not to transmit. In Sec.~\ref{sec:compet}, we express the context in a two-player game and we give the systems correspounding to the resolution of th
 %e Nash equilibrium~\cite{Nash51} and the Stackelberg equilibrium~\cite{stackelberg-book-1934}. In Sec.~\ref{sec:numerical}, we compute numerical simulations to compare and discuss the previous cases.

\newpage
%=============================================================
%=============================================================
\section{General system model}
\label{sec:general-signal-model}

In the whole paper the goal is to study a system comprising two transmitter-receiver
pairs. The signal model under consideration can be described by a frequency non-selective block fading channel. The signals received by the two receivers write as:
\begin{equation} \label{eq:ic-signal-model}
\begin{array}{ccc}
y_1 & = & h_{11} x_1 + h_{21} x_2 + z_1\\
y_2 & = & h_{22} x_2 + h_{12} x_1 + z_2
\end{array}.
\end{equation}
The channel gain of the link $ij$ namely, $h_{ij}$ is assumed to be constant over each block or time-slot. The quantity $g_{ij}= |h_{ij}|^2$ is assumed to be a continuous random variable having independent realizations and distributed according to the probability density function $\phi_{ij}(g_{ij})$. The reception noises are zero-mean complex white Gaussian noises with variance $\sigma^2$. The instantaneous power of the
transmitted signal $x_i$ on time-slot $t$ is given by
\begin{equation}
\label{eq:def-transmit-power}
p_i(t)= \frac{1}{N} \sum_{n=1}^N
|x(n)|^2
\end{equation}
where $n$ is the symbol index and $N$ the number of symbols per time-slot. For simplicity, transmissions are assumed to be time-slotted.

Transmitter $1$ (resp. $2$), receiver $1$ (resp. $2$),
link $11$ (resp. $22$) will be respectively called primary (resp. secondary) transmitter, primary (resp. secondary) receiver, and (resp. secondary) primary
link. The main technical difference between the primary and the secondary links is
that the secondary transmitter can observe the power levels chosen by the primary transmitter but the converse does not hold. In this paper, two scenarios are investigated:
\begin{itemize}
  \item Scenario 1 (Sec. \ref{sec:pcs}): the secondary transmitter is imposed to meet a zero-interference constraint on the primary link.
Since the primary and secondary signals are orthogonal, everything happens for the transmitter $1$ as if it was transmitting over a single-user channel.

  \item Scenario 2 (Sec. \ref{sec:compet}): this time, the secondary transmitter can use all the time-slots and not only those not exploited by the primary link. Primary and secondary signals are therefore not orthogonal in general. In this framework, for each time-slot, the primary transmitter chooses its power level and is informed that the secondary will observe and react to it in a rational manner. A Stackelberg game formulation is proposed to study this interactive situation.
\end{itemize}

%=============================================================
%=============================================================
\section{When primary and secondary signals are orthogonal}
\label{sec:pcs}

%-------------------------------------------------------------
\subsection{Optimal power control scheme for the primary transmitter}
\label{sec:optimal-pc-primary}

From the primary point of view, there is no interference and the signal-to-noise plus
interference ratio (SINR) coincides with the signal-to-noise ratio (SNR):
\begin{equation}
\label{eq:snr} \mathrm{SNR}(p_1(g_{11})) = \frac{g_{11} p_1(g_{11})}{\sigma^2}.
\end{equation}
When using the notation $p_1(g_{11})$ instead of $p_{11}(t)$ we implicitly make appropriate ergodicity assumptions on $g_{11}$. The main purpose of this section is precisely to determine the optimal control function $p_1(g_{11})$ in the sense of the long-term energy efficiency, which is defined as follows:
\begin{equation}
\label{eq:utility} u_1(p_1(g_{11})) = R_1 \int_{0}^{+\infty} \phi_{11}(g_{11})
\frac{f(\mathrm{SNR}(p_1(g_{11})) )}{p_1(g_{11})} \mathrm{d}g_{11}
\end{equation}
where $R_1$ is the transmission rate and $f$ is an efficiency function representing the packet success rate $f:\mathbb{R}^+ \rightarrow [0,1]$. The function $f$ is assumed to possess the following properties:
\begin{enumerate}
\item $f$ is non-decreasing, C$^2$ differentiable, $f(0)=0$, $\lim\limits_{x \to +\infty}f(x) = 1$ and there exists a unique inflection point $x_0$ for $f$.
\item $f'$ is non-negative, $f'(0) = \lim\limits_{x \to +\infty}f'(x) = 0$. $f'$ reaches its maximum for $x_0$.
\item $f''$ is non-negative over $[0,x_0]$, negative over $[x_0,+\infty[$. $f^{(2)}(0)=0$, $\lim\limits_{x \to +\infty}f''(x) = 0^{-}$.
\end{enumerate}
These properties are verified by the two typical efficiency functions available
 in the literature:
\begin{equation}
f_a(x)=\begin{array}\{{cc}.
e^{-\frac ax}\; \ \forall x > 0 \\
0\; \text{if }x=0
\end{array}
\end{equation}
and
\begin{equation}
f_M(x)=\left(1-e^{-x}\right)^M\; \forall x  \geq 0.
\end{equation}
The function $f_a$, $a\geq0$ has been introduced in~\cite{belmega-valuetools-2009}\cite{belmega-tsp-2010} and corresponds to the case where the efficiency function equals one minus the outage probability. On the other hand, $f_M$, $M \in \mathbb{N}^*$, corresponds to an empirical approximation of the packet success rate which was already used in~\cite{goodman-pc-2000}.

Compared to references~\cite{goodman-pc-2000}\cite{meshkati-jsac-2006}\cite{lasaulce-twc-2009}, note that the user's utility is the average energy-efficiency and not the instantaneous energy-efficiency. This allows one to take into account the following energy constraint:
\begin{equation}
\label{eq:energy-constraint} T \int_{0}^{+\infty} \phi_{11}(g_{11})  p_1(g_{11})
\mathrm{d}g_{11} \leq E_1
\end{equation}
where $T$ is the time-slot duration and $E_1$ is the available energy for terminal $1$. In order to
find the optimal solution(s) for the power control schemes, let us consider the Lagrangian $L_{u_1}$. It writes as:

\begin{equation}
\begin{aligned}
L_{u_1} &= R_1 \int_{0}^{+\infty} \phi_{11}(g_{11})\frac{f(\mathrm{SNR}(p_{1}(g_{11})) )}{p_1(g_{11})} \mathrm{d}g_{11}\\
&- \lambda( T \int_{0}^{+\infty} \phi_{11}(g_{11})  p_1(g_{11})
\mathrm{d}g_{11} - E_1).
\end{aligned}
\end{equation}

It is ready to show that the optimal instantaneous signal-to-noise
ratio (\ref{eq:snr}) has to be the solution of $\frac{\partial L_{u_1}}{\partial p_1(g_{11})} = 0$:
\begin{equation}
\label{eq:gamma-star} x f'(x) - f(x) = \frac{\lambda T \sigma^4 }{R_1 g_{11}^2} x^2.
\end{equation}
%where $\lambda$ is the Lagrangian associated with
%(\ref{eq:energy-constraint}).
Solving the above equation amounts to finding the zeros of $F(x) = x f'(x) - f(x) - \frac{\lambda T \sigma^4 }{R_1 g_{11}^2} x^2$. We have that
$F$ is C$^1$ differentiable, $F(0)=0$, $\lim\limits_{x \to +\infty}F(x) = -\infty$, and
\begin{equation}
F'(x)=x f^{(2)}(x) - 2 \frac{\lambda T \sigma^4 }{R_1 g_{11}^2} x.
\end{equation}
Then, $\exists \epsilon, \; \forall x \in ]0,\epsilon],\; F'(x) < 0$.
%$F'(0) = 0$, $\forall \left(u(n)\right)_{n \in \N} | \lim_{n \to +\infty} u(n)= 0$, $\exists n_0 | \forall n \geq n_0, F'(u(n))<0$.
Considering the sign of $F'$, given the particular form of $f^{(2)}$, two cases have to be considered.
\begin{itemize}
\item If $\forall x$, $f''(x) \leq 2 \frac{\lambda T \sigma^4 }{R_1 g_{11}^2}$, $F'$ is negative or null and $F$ is decreasing. Then $0$ is the only zero for $F$.
\item If $\exists (x_1,x_2),\; x_1<x_2$ st $f''(x_1) = f''(x_2) = \frac{\lambda T \sigma^4 }{R_1 g_{11}^2}$, and $F'$ non-negative over $[x_1,x_2]$. $F$ decreases over $[0,x_1]$, increases over $[x_1,x_2]$ and decreases over $[x_2,+\infty[$. Then $F$ may have zero, one or two zeros different from $0$.
\end{itemize}
%\end{itemize}

%\begin{equation}
%\begin{aligned}
%&F(0) = 0 \\
%&F'(x) = x f^{(2)} - 2 \frac{\lambda T \sigma^4 }{R g^2} x
%\end{aligned}
%\end{equation}

%\begin{figure}
%\begin{center}
%%\hspace{-1cm}
%\includegraphics[scale=0.6]{derivsecond}
%\caption{Typical form of $f^{(2)}(x)$. In this particular case $f(x) = e^{-\frac ax}$ and $a=3$.}
%\label{Fig:derivsecond}
%\end{center}
%\end{figure}

%\begin{figure}
%\begin{center}
%%\hspace{-1cm}
%\includegraphics[scale=0.5]{fillustration}
%\caption{Typical form of $f(x)$ and its two first derivatives. In this particular case $f(x) = e^{-\frac ax}$ and $a=10$.}
%\label{Fig:fillustration}
%\end{center}
%\end{figure}

%Considering the typical form of $f^{(2)}$ for a sigmoid function (see Fig.~\ref{Fig:derivsecond}), there are two cases we can consider
%\begin{itemize}
%\item $\forall x$, $f^{(2)}(x) \leq 2 \frac{\lambda T \sigma^4 }{R g^2}$. In this case, $F'$ is negative or null and $F$ is decreasing. $0$ is the only zero for $F$
%\item $\exists (x_1,x_2),\; x_1<x_2$ st $f^{(2)}(x_1) = f^{(2)}(x_2) = \frac{\lambda T \sigma^4 }{R g^2}$. In this case, $F$ decreases over $[0,x_1]$, increases over $[x_1,x_2]$ and decreases over $[x_2,+\infty[$. $F$ may have zero, one or two zeros different from $0$.
%\end{itemize}

If $F$ has one zero, it is $0$ and $0$ is the maximum for $L_{u_1}$. If $F$ has two zeros: $0$ and $x'_0$, $L_{u_1}$ is decreasing and $0$ is the maximum for $L_{u_1}$. If $F$ has three zeros: $0$, $x'_1$ and $x'_2$, $L_{u_1}$ decreases over $[0,x'_1]$, increases over $[x'_1,x'_2]$ and decreases over $[x'_2,+\infty[$. The maximum for $L_{u_1}$ is then $0$ or $x_2$.
%Fig.~\ref{Fig:FSNR_A} and ~\ref{Fig:FSNR_M} illustrates the three possible cases for functions $e^{-\frac ax}$ and $(1-e^{-x})^M$.

%\begin{figure}
%\begin{center}
%%\hspace{-1cm}
%\includegraphics[scale=0.5]{FSNR_M}
%\caption{Several profiles of $F$ for $f(x)=(1-e^{-x})^M$.}
%\label{Fig:FSNR_M}
%\end{center}
%\end{figure}

%For $f(x) = e^{-\frac{a}{x}}$, (\ref{eq:gamma-star}) turns into
%\begin{equation}
%(a-x)e^{-\frac{a}{x}} = \frac{\lambda T \sigma^4}{R g^2} x^3.
%\end{equation}
%For $f(x) = (1-e^{-x})^M$, (\ref{eq:gamma-star}) turns into
%\begin{equation}
%(1-e^{-x})^{M-1}(e^{-x}(M x +1) -1) = \frac{\lambda T \sigma^4}{R g^2} x^2.
%\end{equation}

%Again, the solutions of the above equation should be \textbf{studied
%in details}, depending on the efficiency function $f$ and the
%parameters $\lambda, T, ...$.

%Fig.~\ref{Fig:SNR} illustrates function $F$ with $f(x) = e^{-\frac ax}$ for various values of $g^2$. We can see that there are positive solutions of (\ref{eq:gamma-star}) if $g^2$ is good enough and that these solutions are not unique. In Fig.~\ref{Fig:SNR}, it is clearly viewable for $g^2 = 10^{-11}$.

%\begin{figure}
%\begin{center}
%%\hspace{-1cm}
%\includegraphics[scale=0.6]{otpimalSNR}
%\caption{Function F depending on SNR.}
%\label{Fig:SNR}
%\end{center}
%\end{figure}

%\begin{figure}
%\begin{center}
%%\hspace{-1cm}
%\includegraphics[scale=0.5]{counter}
%\caption{In this example, $\mathrm{snr}_{\lambda_E}^*(g)=0$ but $g > \sigma^2 \sqrt{2 \frac{\lambda T }{R \max_{x}f^{(2)}(x)}}$}
%\label{Fig:Counter}
%\end{center}
%\end{figure}

Assume $\mathrm{SNR}_{\lambda_{E_1}}^*(g)$ is the greatest solution of equation (\ref{eq:gamma-star}).
%Note that we have $\mathrm{snr}_{\lambda_E}^*(g)=0$ if and only if $g \leq \sigma^2 \sqrt{2 \frac{\lambda T }{R \max_{x}f^{(2)}(x)}}$. AAA Francois: this is not true, we can only write $\mathrm{snr}_{\lambda_E}^*(g)=0$ if $g \leq \sigma^2 \sqrt{2 \frac{\lambda T }{R \max_{x}f^{(2)}(x)}}$, see Fig.~\ref{Fig:Counter} for counter-example.ZZZ
Then an optimal power control scheme is given by:
\begin{equation}
\label{eq:pc-scheme}
p_1^*(g_{11}) = \frac{\sigma^2}{g_{11}} \mathrm{SNR}_{\lambda_{E_1}}^*(g_{11})
\end{equation}
with $\mathrm{SNR}_{\lambda_{E_1}}^*(g_{11}) \geq 0 $. Since $E_1$ is fixed,
the methodology consists in determining $\lambda_{E_1} $,
%: \textbf{Is it always optimal to saturate the energy-constraint?},
then a solution
of (\ref{eq:gamma-star}) is determined numerically. Note that $\lambda_{E_1}$ is in bits/Joule$^2$. It can be interpreted as a minimal number of bits to transmit for $1$ Joule$^2$. The higher $\lambda_{E_1}$ is, the better the channel should be to be used.

\textbf{Remark (Capacity of fast fading channels).} The proposed analysis is reminiscent
to the capacity determination of fast fading single-user channels~\cite{goldsmith-tit-1997}. Two important
differences between this and our analysis are worth being emphasized. First, mathematically, the optimization problem under study is more general than the one of~\cite{goldsmith-tit-1997}. Indeed, if one makes the particular choice $f(\mathrm{SNR}(p_{1}(g_{11}))) = p_{1} \log\left(
1 +  \mathrm{SNR}(p_{1}(g_{11}))\right)$, the optimal SNR is given by $\mathrm{SNR}^*(p_1(g_{11})) =
\frac{g_{11}}{\lambda_{E_1} \sigma^2} - 1$, which corresponds to a water-filling solution (the SNR has to be non-negative). Second, the physical interpretation of the average utility is different from the fast fading case. In the fast fading case, the power control is updated at the symbol rate whereas in our case, it is updated at the time-slot frequency namely, $\frac{1}{T}$. Indeed, in power control problems, what is updated is the average power over a block or time-slot and assuming an average power constraint over several blocks or time-slots generally does not make sense. However, from an energy perspective introducing an average constraint is relevant. This comment is a kind of subtle and characterizes our approach.

%\emph{Example:} \emph{the ergodic Shannon rate for fast fading
%channels. Then}
%\begin{equation}
%\frac{f(\mathrm{snr}(p(g)) )}{p(g)} = \log\left(1+
%\mathrm{snr}(p(g)) \right).
%\end{equation}
%\emph{It can be checked that: the energy constraint has to be
%saturated, the solution is unique, $\mathrm{snr}^*(p(g)) =
%\frac{g}{\lambda \sigma^2} - 1$. The non-negativeness constraint
%leads to the classical water-filling solution [goldsmith-it-1997]:}
%\begin{equation}
%p^*(g) =  \left[ \frac{\sigma^2}{g} \mathrm{snr}_{\lambda_E}^*(g)
%\right]^+= \left[ \frac{1}{\lambda_E} - \frac{\sigma^2}{g}\right]^+.
%\end{equation}
%\emph{One important point to notice is that, depending on $g$, the
%transmitter will transmit or not transmit. This feature, which is
%assumed to be also available for other choices of $f$, is exploited
%in the following section.}

%==========================================================
\subsection{Time-slot occupancy probability}
\label{sec:primary}

As shown in the preceding section, time-slots are not used by the primary link when
the solution $\mathrm{SNR}_{\lambda_{E_1}}^*(g_{11})$ is negative. Therefore, the probability that
this event occurs corresponds to the probability of having a free time-slot for
the secondary link. It is thus relevant to evaluate $\mathrm{Pr}[\mathrm{SNR}_{\lambda_{E_1}}^*(g_{11}) \leq 0]$. At first glance, explicating this probability does not seem to be trivial. However,
one can see from the preceding section that if $\max f'' \leq 2 \frac{\lambda T \sigma^4 }{R_1 g_{11}^2}$, the function $F$ has no non-negative solutions except from $0$, in which case there is no power allocated to channel $g_{11}$. Based on this observation, the following lower bound arises:
\begin{equation}
\mathrm{Pr}\left[\max f'' \leq 2 \frac{\lambda T \sigma^4 }{R_1 g_{11}^2}\right] \leq \mathrm{Pr}[\mathrm{SNR}_{\lambda_{E_1}}^*(g_{11}) \leq 0].
\end{equation}
Many simulations have shown that this lower bound is reasonably tight, 
one of them is provided in the
simulation section; what matters in this paper is to show that 
the fraction of available time-slots can be significant and the proposed
 lower bound ensures to achieve at least the corresponding performance. To
  conclude on this point, note that in the case where $f(\mathrm{SNR}(p_{1}(g_{11}))) = p_{1} \log\left(
1 +  \mathrm{SNR}(p_{1}(g_{11}))\right)$, the probability of having a free time-slot for the secondary link can be easily expressed and is given by:
\begin{equation}
\mathrm{Pr}\left[\mathrm{SNR}_{\lambda_{E_1}}^*(g_{11}) \leq 0\right] = 1 - e^{- \frac{\lambda_{E_1}
\sigma^2}{\ol{g}_{11}}}
\end{equation}
where $\ol{g}_{11} = E(g_{11})$. A similar analysis has been made to design a
Shannon-rate efficient interference alignment technique for static MIMO interference channels~\cite{medina-pimrc-2008}\cite{medina-tsp-submitted}.

%=============================================================
%=============================================================
\section{A Stackelberg formulation of the non-orthogonal case}
\label{sec:compet}

We assume now that both transmitters are free to decide their power control
policy. However, there is still hierarchy in the system in the sense that, for each time-slot, the primary transmitter has to choose its power level in the first place and the secondary transmitter (assumed to equipped with a cognitive radio) observes this level and reacts to it. This framework is
exactly the one of a Stackelberg game since it is assumed that the primary transmitter (called the game leader) knows it is observed by a rational player (the game follower). The SINR for the first transmitter/receiver pair is:
%$$
%SINR_1(p_1(g_{11}),p_2(g_{21}))=\frac{p_1(g_{11})}{\sigma^2+p_{2}(g_{21})}:=\gamma_1,
%$$
\begin{equation}
SINR_1(p_1,p_2)=\frac{p_1 g_{11}}{\sigma^2+p_{2} g_{21}}:=\gamma_1,
\end{equation}
where $g_{21}$ is the channel gain between transmitter 2 and receiver 1. For the second transmitter/receiver pair, the SINR is:
%$$
%SINR_2(p_1(g_{12}),p_2(g_{22}))=\frac{p_2(g_{22})}{\sigma^2+p_{1}(g_{12})}:=\gamma_2,
%$$
\begin{equation}
SINR_2(p_1,p_2)=\frac{p_2 g_{22}}{\sigma^2+p_{1} g_{12}}:=\gamma_2,
\end{equation}
where $g_{12}$ is the channel gain between transmitter 1 and receiver 2.
Using this relation, we have the powers for transmitters 1 and 2 depending on the SINRs:
\begin{equation}
\begin{aligned}
&p_1=\frac{\sigma^2}{g_{11}}\frac{\gamma_1+\gamma_1\gamma_2\frac{g_{21}}{g_{22}}}{1-\alpha \gamma_1\gamma_2},\quad \mbox{and} \quad
p_2=\frac{\sigma^2}{g_{22}}\frac{\gamma_2+\gamma_1\gamma_2\frac{g_{12}}{g_{11}}}{1-\alpha \gamma_1\gamma_2} \\
&\mbox{with} \quad \alpha=\frac{g_{21}g_{12}}{g_{11}g_{22}}.
\end{aligned}
\end{equation}

%We are considering a particular competition framework in which the secondary transmitter/receiver pair will choose his power control scheme after observing the power control scheme of the primary transmitter/receiver pair. Actually, the primary user knows that the secondary transmitter/receiver pair will use this logic and then, the primary transmitter/receiver pair determines his power control scheme considering the best power control scheme of the secondary transmitter/receiver pair. That type of hierarchical game is a Stackelberg game and has been recently used to study optimal power control in wireless networks \cite{HLH11}. This framework gave interesting results which can be extended to our long term power control scheme.

A Stackelberg equilibrium is a vector $(p_1^*,p_2^*)$ such that:
\begin{equation}
p_1^* = \arg\max_{p_1}u_1(p_1,p_2^*(p_1)),
\end{equation}
with
\begin{equation}
\forall p_1, \quad p_2^*(p_1)=\arg\max_{p_2}u_2(p_1,p_2).
\end{equation}
%A Stackelberg equilibrium is not necessary unique and it is easy to check that a Nash equilibrium is a %Stackelberg equilibrium.
Note that the above expression implicitly assumes that the best-response of the follower is a singleton,
which is effectively the case for the problem under study. In our Stackelberg game, the utility $u_2$ of the secondary transmitter/receiver pair depends on the power control scheme $p_1$ through the expression:
%$$
%\forall p_1,\quad u_2(p_1,p_2)=R \int_{0}^{+\infty}\int_{0}^{+\infty} \phi_1(g_{12})\phi_2(g_{22})\frac{f(\frac{p_2g_{22}}{\sigma^2+p_{1}g_{12}})}{p_2} \mathrm{d}g_{12}\mathrm{d}g_{22},
%$$
\begin{equation}
\begin{aligned}
&\forall p_1,\quad u_2(p_1,p_2)= \\
&R_2 \int_{0}^{+\infty}\int_{0}^{+\infty} \phi_{12}(g_{12})\phi_{22}(g_{22})\frac{f(\frac{p_2 g_{22}}{\sigma^2+p_{1}g_{12}})}{p_2} \mathrm{d}g_{12}\mathrm{d}g_{22},
\end{aligned}
\end{equation}
with the energy constraint:
\begin{equation}
T \int_{0}^{+\infty} \phi_{22}(g_{22})  p_2
\mathrm{d}g_{22} \leq E_2.
\end{equation}
In order to determine a Stackelberg equilibrium, we first have to express the best response of the follower that is, the best power control scheme for the secondary transmitter/receiver pair, given the long term power control scheme of the primary transmitter/receiver pair.

For a given $p_1(g_{12})$, the Lagrangian $L_{u_2}$ of $u_2$ is given by:

\begin{equation}
\begin{aligned}
&L_{u_2}(p_1,p_2,\lambda_2)= \\
&R_2 \int_{0}^{+\infty}\int_{0}^{+\infty} \phi_{12}(g_{12})\phi_{22}(g_{22})\frac{f(\frac{p_2 g_{22}}{\sigma^2+p_{1}g_{12}})}{p_2} \mathrm{d}g_{12}\mathrm{d}g_{22} \\
&- \lambda_2( T \int_{0}^{+\infty} \phi_{22}(g_{22})  p_2\mathrm{d}g_{22} - E_2).
\end{aligned}
\end{equation}

\begin{proposition}[Optimal SINR for the secondary transmitter] The secondary transmitter has to tune
its power level such that its SINR is the greatest zero of the following equation:
\begin{equation}
 x f'(x) - f(x) = \frac{\lambda_2 T (\sigma^2+p_1 g_{12})^2 }{R_2 g_{22}^2} x^2.
\label{sinr2stack}
\end{equation}
\end{proposition}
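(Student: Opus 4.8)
The plan is to reproduce, step by step, the Lagrangian argument already carried out for the leader in Section~\ref{sec:optimal-pc-primary}, now applied to $L_{u_2}$ in place of $L_{u_1}$. First I would treat the follower's power level as a (measurable) function of the state observed by the secondary pair and optimize $L_{u_2}$ pointwise in $(g_{12},g_{22})$: because the two channel-dependent terms of $L_{u_2}$ are coupled only through the single multiplier $\lambda_2$, the stationarity condition $\frac{\partial L_{u_2}}{\partial p_2}=0$ collapses to the scalar equation $R_2\,\frac{\partial}{\partial p_2}\!\left(\frac{f(\gamma_2)}{p_2}\right)=\lambda_2 T$, to be solved for each realization, with $\gamma_2 = \frac{p_2 g_{22}}{\sigma^2+p_1 g_{12}}$ the secondary SINR.

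Next I would carry out the change of variable from $p_2$ to $\gamma_2$. Since $p_2=\frac{\sigma^2+p_1 g_{12}}{g_{22}}\gamma_2$ one has $\frac{f(\gamma_2)}{p_2}=\frac{g_{22}}{\sigma^2+p_1 g_{12}}\,\frac{f(\gamma_2)}{\gamma_2}$ and $\frac{\d\gamma_2}{\d p_2}=\frac{g_{22}}{\sigma^2+p_1 g_{12}}$, so the stationarity condition becomes $\frac{R_2 g_{22}^2}{(\sigma^2+p_1 g_{12})^2}\cdot\frac{\gamma_2 f'(\gamma_2)-f(\gamma_2)}{\gamma_2^2}=\lambda_2 T$, which is exactly~(\ref{sinr2stack}) with $x=\gamma_2$. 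This is the same chain-rule computation that produced~(\ref{eq:gamma-star}), with $\sigma^4/g_{11}^2$ replaced by $(\sigma^2+p_1 g_{12})^2/g_{22}^2$; it is routine and I would not belabour it.

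Then I would repeat the root-counting/comparison argument to single out the \emph{greatest} zero. Setting $G(x)=xf'(x)-f(x)-c\,x^2$ with $c=\frac{\lambda_2 T(\sigma^2+p_1 g_{12})^2}{R_2 g_{22}^2}>0$, one has $G(0)=0$, $G(x)\to-\infty$, and $G'(x)=x\bigl(f''(x)-2c\bigr)$; since $f^{(2)}(0)=0$ we get $G'<0$ on a right-neighbourhood of $0$, and by the stipulated sign pattern of $f''$ (nonnegative on $[0,x_0]$, negative beyond) $G'$ is positive on at most one interval $[x_1,x_2]$. Hence $G$ has at most two nonzero roots, and comparing the values of $L_{u_2}$ at its stationary points ($0$ and the roots of $G$) exactly as in Section~\ref{sec:optimal-pc-primary} shows the maximiser is either $p_2=0$ or the power attached to the largest root. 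Defining $\gamma_2^*$ as that largest root (and $\gamma_2^*=0$ when $G$ has no positive root) gives the asserted optimal SINR, with the associated power $p_2^*=\frac{\sigma^2+p_1 g_{12}}{g_{22}}\gamma_2^*$ and $\lambda_2$ fixed afterwards by the energy constraint, in complete analogy with $\lambda_{E_1}$.

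The main obstacle is the very first step: justifying that the constrained maximisation of $u_2$ over power-control functions genuinely reduces to the pointwise Lagrangian condition with one multiplier. This requires care about the interchange of optimisation and integration (a measurable-selection issue) and about which channel variables $p_2$ is allowed to depend on, so that the energy-constraint term and the utility term are differentiated consistently. Everything past that point is the same chain-rule and ``shape of $f''$'' bookkeeping already used for the leader, so the analytic content of the proposition is essentially the observation that introducing the interference term $\sigma^2+p_1 g_{12}$ in place of $\sigma^2$ leaves the structure of the Section~\ref{sec:optimal-pc-primary} analysis intact.
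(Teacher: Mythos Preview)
Your proposal is correct and is precisely the approach the paper indicates: the paper's entire proof is the one-line remark that ``the proof is ready and follows the single-user case analysis,'' and you have faithfully carried out that reduction, replacing $\sigma^2/g_{11}$ by $(\sigma^2+p_1 g_{12})/g_{22}$ and rerunning the Lagrangian and root-counting argument of Section~\ref{sec:optimal-pc-primary}. The reservation you flag about the pointwise-optimisation step is equally present (and equally unaddressed) in the paper's single-user derivation, so it is not a gap relative to the paper's own standard of rigour.
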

The proof is ready and follows the single-user case analysis, which is conducted in Sec. \ref{sec:pcs}. The optimal power control scheme $p_2^*$ of the secondary transmitter/receiver pair, depending on the power control scheme $p_1$ is given by:
%$$
%p_2^*(p_1)=\frac{\sigma^2+p_1g_{12}}{g_{22}}x_2(p_1(g_{12})),
%$$
\begin{equation}
p_2^*(p_1)=\frac{\sigma^2+p_1 g_{12}}{g_{22}}x_2(p_1),
\end{equation}
where $x_2(p_1)$ is the greatest solution of (\ref{sinr2stack}).

Now, let us the consider the case of the primary transmitter.

\begin{proposition}[Optimal SINR for the primary transmitter] The primary transmitter has to tune
its power level such that its SINR is the greatest zero of the following equation:
\scriptsize
\begin{equation}
\begin{aligned}
&x f'(x)\left[1-\alpha x_2x-G(x)\right] - f(x) = \frac{\lambda_1 T \sigma^4}{R_1 g_{11}^2} \left(\frac{1+\frac{g_{21}}{g_{22}}x_2}{1-\alpha x x_2}\right)^2x^2,\\
&\mbox{with} \quad G(x) = \frac{\alpha x(1+\frac{g_{12}}{g_{11}}x)^2x_2}{(1-\alpha x_2x)^2\frac{R_2 g_{22}^2}{2\lambda_2 T\sigma^4}f''(x_2)-(1+\frac{g_{12}}{g_{11}}x)^2}.
\label{sinr1stack}
\end{aligned}
\end{equation}
\normalsize
\end{proposition}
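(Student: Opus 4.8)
\medskip
\noindent\textbf{Proof sketch.}
The plan is to mirror the single-user derivation of Sec.~\ref{sec:pcs}, the only genuinely new element being that, in the Stackelberg game, the interference seen by the leader depends on $p_1$ \emph{through} the follower's reaction $p_2^*(p_1)$. First I would substitute $p_2 = p_2^*(p_1) = (\sigma^2+p_1 g_{12})\,x_2(p_1)/g_{22}$ (from the preceding Proposition, with $x_2(p_1)$ the greatest root of \eqref{sinr2stack}) into $SINR_1$, so that for each channel realization the leader's SINR becomes a function of $p_1$ alone:
\begin{equation}
x := SINR_1 = \frac{p_1 g_{11}}{D(p_1)}, \qquad D(p_1) := \sigma^2 + \frac{g_{21}}{g_{22}}(\sigma^2+p_1 g_{12})\, x_2(p_1).
\end{equation}
Then I would form the leader's Lagrangian $L_{u_1}(p_1,\lambda_1)$ of its utility $u_1$ (built from $SINR_1$ by analogy with \eqref{eq:utility}) under the constraint $T\int \phi_{11} p_1\,\mathrm{d}g_{11}\le E_1$, and impose the integrand-level stationarity condition $\partial L_{u_1}/\partial p_1 = 0$ just as \eqref{eq:gamma-star} was obtained. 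Multiplying through by $p_1^2/R_1$, this collapses to
\begin{equation}
p_1\, f'(x)\,\frac{\mathrm{d}x}{\mathrm{d}p_1} - f(x) = \frac{\lambda_1 T p_1^2}{R_1}. \label{eq:leaderFOC}
\end{equation}

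The heart of the argument is the evaluation of $\mathrm{d}x/\mathrm{d}p_1$ along the follower's best-response curve. I would first differentiate the follower's optimality equation \eqref{sinr2stack} implicitly in $p_1$ --- legitimate by the implicit function theorem since $f\in C^2$ and the required non-degeneracy holds --- to get
\begin{equation}
\frac{\mathrm{d}x_2}{\mathrm{d}p_1} = \frac{\frac{2\lambda_2 T}{R_2 g_{22}^2}(\sigma^2+p_1 g_{12})\, g_{12}\, x_2}{f''(x_2) - \frac{2\lambda_2 T}{R_2 g_{22}^2}(\sigma^2+p_1 g_{12})^2},
\end{equation}
and then apply the quotient rule to $x = p_1 g_{11}/D(p_1)$, which gives $p_1\,\mathrm{d}x/\mathrm{d}p_1 = x\,(1 - (p_1/D)\,\mathrm{d}D/\mathrm{d}p_1)$, with $\mathrm{d}D/\mathrm{d}p_1$ obtained from the formula for $D$ and the expression for $\mathrm{d}x_2/\mathrm{d}p_1$. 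It then remains to show that $(p_1/D)\,\mathrm{d}D/\mathrm{d}p_1$ splits \emph{exactly} as $\alpha x\,x_2 + G(x)$: the ``direct interference'' part reduces to $\alpha x\,x_2$ upon using $\alpha = g_{21}g_{12}/(g_{11}g_{22})$ and $x = p_1 g_{11}/D$, while the ``follower-reaction'' part condenses into $G(x)$ after invoking the identities
\begin{equation}
p_1 = \frac{\sigma^2}{g_{11}}\,\frac{x(1+\frac{g_{21}}{g_{22}}x_2)}{1-\alpha x x_2}, \qquad \sigma^2 + p_1 g_{12} = \sigma^2\,\frac{1+\frac{g_{12}}{g_{11}}x}{1-\alpha x x_2}
\end{equation}
(the second following from the first and the definition of $\alpha$), which turn the factor $(\sigma^2+p_1 g_{12})^2$ in the denominator of $\mathrm{d}x_2/\mathrm{d}p_1$ into exactly $\sigma^4(1+\frac{g_{12}}{g_{11}}x)^2/(1-\alpha x x_2)^2$, i.e.\ the factor appearing in $G$. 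Plugging $p_1\,\mathrm{d}x/\mathrm{d}p_1 = x[1-\alpha x x_2 - G(x)]$ and $p_1^2 = \sigma^4 x^2(1+\frac{g_{21}}{g_{22}}x_2)^2/(g_{11}^2(1-\alpha x x_2)^2)$ into \eqref{eq:leaderFOC} then yields \eqref{sinr1stack}.

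Finally I would select, as in Sec.~\ref{sec:pcs}, the greatest root as the maximizer: letting $F_1(x)$ denote the difference of the two sides of \eqref{sinr1stack}, one checks $F_1(0)=0$, the sign of $F_1$ near $0$ and as $x\to+\infty$, and the limited number of monotonicity changes of $F_1$, and concludes that its greatest zero maximizes $L_{u_1}$; the optimal power $p_1^*$ then follows by inverting the power/SINR relation. I expect the main obstacle to be the bookkeeping in the middle step --- making the follower-reaction term collapse \emph{exactly} onto $G(x)$ rather than an equivalent-looking variant --- together with (i) the justification that $x_2(p_1)$ is $C^1$, i.e.\ that $f''(x_2)\neq \frac{2\lambda_2 T}{R_2 g_{22}^2}(\sigma^2+p_1 g_{12})^2$ at the relevant solution (equivalently, that the denominator of $G$ does not vanish there), and (ii) re-establishing, for the deformed function $F_1$, the shape and sign properties that single out its greatest zero.
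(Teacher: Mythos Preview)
Your proposal is correct and follows essentially the same route as the paper: form the leader's Lagrangian, implicitly differentiate \eqref{sinr2stack} to obtain $x_2'(p_1)$ (your formula matches the paper's exactly), compute $p_1\,\mathrm{d}x/\mathrm{d}p_1$ via the quotient rule, and rewrite everything in terms of $x$ and $x_2$ using the power/SINR identities to recover \eqref{sinr1stack}. If anything, your sketch is more complete than the paper's own proof, which exhibits the computation of $p_1\,\partial\gamma_1/\partial p_1$ but leaves the Lagrangian first-order condition and the final substitution of $p_1^2$ implicit; your explicit flagging of the implicit-function hypothesis (non-vanishing of the denominator of $G$) and of the need to re-check the shape analysis for $F_1$ are points the paper does not address.
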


\begin{proof}
The leader is optimizing his utility function $u_1$ taking into account this best response power control scheme of the follower transmitter/receiver pair. The SINR of the leader transmitter/receiver pair, when the follower transmitter/receiver pair uses his best response power control scheme, is given by:
%$$
%SINR_1(p_1(g_{11}))=\frac{p_1(g_{11})g_{11}}{\sigma^2+p^*_2(p_1(g_{11}))g_{21}}=\frac{p_1(g_{11})g_{11}}{\sigma^2(1+x_2(p_1(g_{11})))+p_1(g_{11})g_{11}x_2(p_1(g_{11}))}.
%$$
\begin{equation}
\begin{aligned}
\mathrm{SINR}_1(p_1,p^*_2(p_1))&=\frac{p_1 g_{11}}{\sigma^2+p^*_2(p_1)g_{21}}\\
&=\frac{p_1 g_{11}}{\sigma^2(1+\frac{g_{21}}{g_{22}}x_2(p_1))+p_1 \frac{g_{12}g_{21}}{g_{22}}x_2(p_1)}.
\end{aligned}
\end{equation}
The derivative of the SINR of the leader is
\begin{equation}
\begin{aligned}
&\frac{\partial \gamma_1}{\partial p_1}(p_1)=\\
&g_{11}\frac{\sigma^2(1+\frac{g_{21}}{g_{22}}x_2(p_1))-p_1\sigma^2\frac{g_{21}}{g_{22}}x'_2(p_1)-p_1^2\frac{g_{12}g_{21}}{g_{22}}x'_2(p_1)}{(\sigma^2(1+\frac{g_{21}}{g_{22}}x_2(p_1))+p_1 \frac{g_{12}g_{21}}{g_{22}}x_2(p_1))^2}
\end{aligned}
\end{equation}
Then we have
\scriptsize
\begin{equation}
\begin{aligned}
&p_1\frac{\partial \gamma_1}{\partial p_1}(p_1)=\\
&\gamma_1(p_1)\frac{\sigma^2(1+\frac{g_{21}}{g_{22}}x_2(p_1))-p_1\sigma^2\frac{g_{21}}{g_{22}}x'_2(p_1)-p_1^2\frac{g_{12}g_{21}}{g_{22}}x'_2(p_1)}{\sigma^2(1+\frac{g_{21}}{g_{22}}x_2(p_1))+p_1 \frac{g_{12}g_{21}}{g_{22}}x_2(p_1)},\\
&=\gamma_1(p_1)\left(1- \frac{p_1 \frac{g_{12}g_{21}}{g_{22}}x_2(p_1)+p_1\sigma^2\frac{g_{21}}{g_{22}}x'_2(p_1)+p_1^2\frac{g_{12}g_{21}}{g_{22}}x'_2(p_1)}{\sigma^2(1+\frac{g_{21}}{g_{22}}x_2(p_1))+p_1 \frac{g_{12}g_{21}}{g_{22}}x_2(p_1)}\right),\\
&=\gamma_1(p_1)\left(1-x_2(p_1)\alpha \gamma_1(p_1)- \frac{(\sigma^2+p_1g_{12})p_1\frac{g_{21}}{g_{22}}x'_2(p_1)}{\sigma^2(1+\frac{g_{21}}{g_{22}}x_2(p_1))+p_1 \frac{g_{12}g_{21}}{g_{22}}x_2(p_1)}\right),\\
&=\gamma_1(p_1)\left(1-x_2(p_1)\alpha \gamma_1(p_1)-\frac{\sigma^2+p_1g_{12}}{g_{12}}\alpha x'_2(p_1)\gamma_1(p_1)\right)
\end{aligned}
\end{equation}
\normalsize

Taking the expression of $x_2(p_1)$ we get:
\begin{equation}
\begin{aligned}
&x_2' f'(x_2)+x_2x'_2f''(x_2) - x'_2f'(x_2) = \\
&2\frac{\lambda_2 T (\sigma^2+p_1g_{12}) }{R_2 g_{22}^2} g_{12}x_2^2+2\frac{\lambda_2 T (\sigma^2+p_1g_{12})^2 }{R_2 g_{22}^2} x_2x'_2,
\end{aligned}
\end{equation}
which yields to:
\begin{equation}
x'_2f''(x_2)=2\frac{\lambda_2 T (\sigma^2+p_1g_{12}) }{R_2 g_{22}^2} g_{12}x_2+2\frac{\lambda_2 T (\sigma^2+p_1g_{12})^2 }{R_2 g_{22}^2} x'_2.
\end{equation}
Then we get the derivative of $x_2(p_1)$:
\begin{equation}
x'_2(p_1)=\frac{\frac{2\lambda_2 T}{R_2 g_{22}^2}(\sigma^2+p_1g_{12})g_{12}x_2}{f''(x_2)-\frac{2\lambda_2 T}{R_2 g_{22}^2}(\sigma^2+p_1g_{12})^2}.
\end{equation}
Then we have:
\begin{equation}
\frac{(\sigma^2+p_1g_{12})x'_2(p_1)}{g_{12}}=\frac{\frac{2\lambda_2 T}{R_2 g_{22}^2}(\sigma^2+p_1g_{12})^2x_2}{f''(x_2)-\frac{2\lambda_2 T}{R_2 g_{22}^2}(\sigma^2+p_1g_{12})^2}.
\end{equation}
Taking the expression of the power of receiver/transmitter pair 1 depending on both SINRs, we get:
\begin{equation}
\sigma^2+p_1g_{12}=\sigma^2\left(\frac{1+\frac{g_{12}}{g_{11}}\gamma_1}{1-\alpha \gamma_1 \gamma_2}\right),
\end{equation}
Then
\small
\begin{equation}
\frac{(\sigma^2+p_1g_{12})x'_2(p_1)}{g_{12}}=\frac{(1+\frac{g_{12}}{g_{11}}\gamma_1)^2x_2}{(1-\alpha \gamma_2 \gamma_1)^2\frac{R_2 g_{22}^2}{2\lambda_2 T\sigma^4}f''(x_2)-(1+\frac{g_{12}}{g_{11}}\gamma_1)^2}.
\end{equation}
\normalsize

Then we have:
\small
\begin{equation}
\begin{aligned}
&p_1\frac{\partial \gamma_1}{\partial p_1}(p_1)=\\
&\gamma_1\left(1-\alpha x_2 \gamma_1-\frac{\alpha \gamma_1(1+\frac{g_{12}}{g_{11}}\gamma_1)^2x_2}{(1-\alpha x_2 \gamma_1)^2\frac{R g_{22}^2}{2\lambda_2 T\sigma^4}f''(x_2)-(1+\frac{g_{12}}{g_{11}}\gamma_1)^2}\right)
\end{aligned}
\end{equation}
\normalsize
\end{proof}

By denoting $x_1$ the largest solution of this equation, the optimal power control scheme of the leader at the equilibrium is given by:
%\frac{p_1^*(g_{11})g_{11}}{\sigma^2(1+x_2(p_1^*(g_{11})))+p_1^*(g_{11})g_{11}x_2(p_1^*(g_{11}))}=x_1.
%$$
\begin{equation}
\frac{p_1^* g_{11}}{\sigma^2(1+\frac{g_{21}}{g_{22}}x_2(p_1^*))+p_1 \frac{g_{12}g_{21}}{g_{22}}x_2(p_1^*)}=x_1.
\end{equation}

\normalsize

\section{Numerical Results}
\label{sec:numeric}

%Fig.~\ref{Fig:FSNR_A} illustrates the possible profiles of function $F$ for $f=f_1$.
%\begin{figure}
%\begin{center}
%%\hspace{-1cm}
%\includegraphics[scale=0.5]{FSNR_A}
%\caption{Several profiles of $F$ for $f_1(x)$.}
%\label{Fig:FSNR_A}
%\end{center}
%\end{figure}

The following simulations are performed with the parameters: $T=10^{-3}$ s, $R_1=R_2=10^4$ bits/s, $\sigma^2=10^{-12}$ W, the channel gains $g_{11}$ and $g_{22}$ are assumed to follow a Rayleigh distribution of mean $10^{-10}$, when needed, $g_{12}$ and $g_{21}$ are assumed to follow a Rayleigh distribution of mean $10^{-12}$ and the efficiency function used is $f_a$, defined in Sec. \ref{sec:pcs} with $a=0.9$. Fig.~\ref{Fig:Energy} illustrates the influence of $\lambda_E$ on the energy constraint in a single-user case. When $\lambda_E$ is low, the optimal power control scheme is to transmit most of the time, thus the energy spent is high. On the contrary, when $\lambda_E$ increases, transmission will only occurs when the channel gain is good enough, resulting in a lower energy spent. After a certain threshold, the optimal scheme is not to transmit at all.
\begin{figure}[H]
\begin{center}
%\hspace{-1cm}
\includegraphics[scale=0.6]{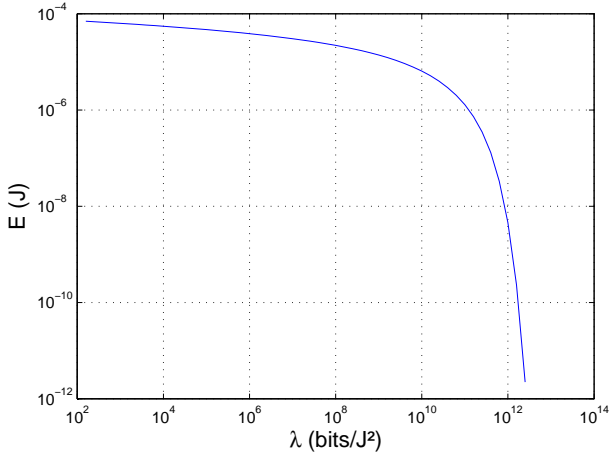}
\caption{Energy spent on duration T depending on $\lambda_E$.}
\label{Fig:Energy}
\end{center}
\end{figure}

In Fig.~\ref{Fig:proba}, we are in the context of Sec. \ref{sec:primary}. We compute the probability per time-slot that the primary link is not used and we compare it to its lower bound. It is interesting to note that this lower-bound is relatively tight to the exact probability.
\begin{figure}
%\begin{center}
\hspace{-0.5cm}
\includegraphics[scale=0.5]{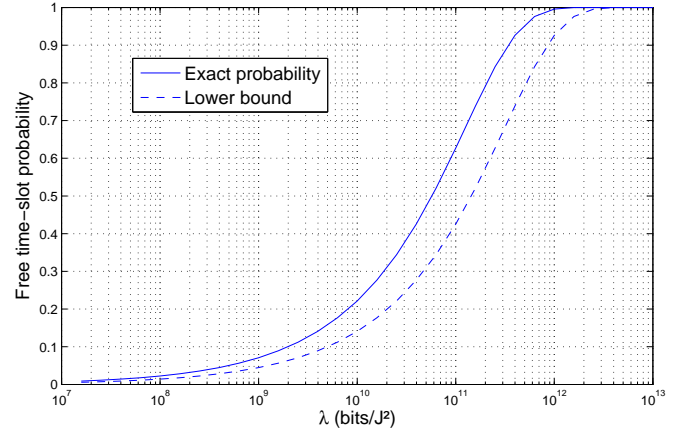}
\caption{Comparison of the exact probability of having free time-slot with the proposed lower bound of this probability.}
\label{Fig:proba}
%\end{center}
\end{figure}

%Fig.~\ref{Fig:instauti} illustrates the instantaneous utility of a player at the Nash equilibrium depending on the channel states. on one side, if the player has a bad channel, he does not transmit. On the other side, if he has a good channel while the other has a bad channel, he reaches his maximum utility. Finally, if the two players have good channel states, they transmit together, leading to a lower instantaneous utility.
%\begin{figure}
%\begin{center}
%%\hspace{-1cm}
%\includegraphics[scale=0.45]{instauti}
%\caption{Instantaneous utility of a player depending on $(g_1,g_2)$. In this case, we choose $\lambda_1 = \lambda_2$.}
%\label{Fig:instauti}
%\end{center}
%\end{figure}

Fig.~\ref{Fig:compauti} compares the expected utilities of Stackelberg equilibrium (Sec. \ref{sec:compet}) and the orthogonal case (Sec. \ref{sec:pcs}). 
%For this simulation, we choose $\lambda_1=\lambda_2=\lambda$, $g_{11}=g_{12}$ and $g_{22}=g_{21}$. Note that in these conditions, the considered interference channel is equivalent to a multiple access channel (MAC). 
As we could expect, the primary link of the orthogonal case offers the best utility, but the orthogonal secondary link has the worst performance. The leader and follower of the Stackelberg case have are much more similar in terms of performance and are very clos to the performance of the primary link which makes the Stackelberg case a very efficient and fair scenario for both links. 
%It is also very interesting to note Stackelberg utilities increase with $\lambda$. We explain this phenomenon by the fact that for small $\lambda$, both players will often transmit together creating a lot of interference. When $\lambda$ gets higher, players transmit less often and then they reduce their probability to interfere. 
Of course, like in the single-user case, after a threshold for $\lambda$, they do not transmit at all.
\begin{figure}[H]
%\begin{center}
\hspace{-0.5cm}
\includegraphics[scale=0.5]{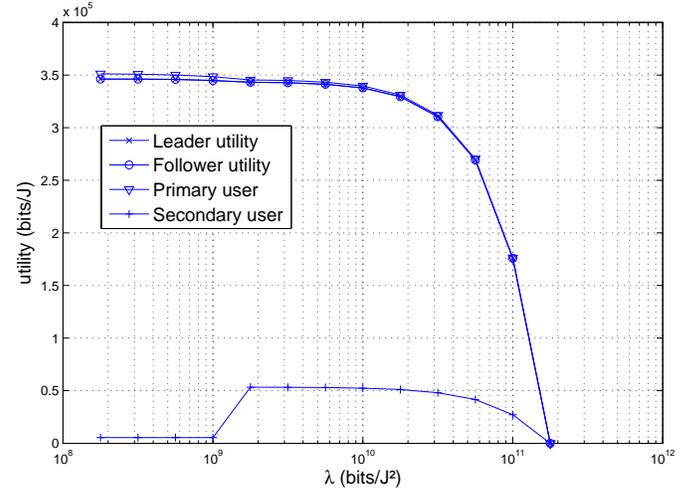}
\caption{Comparison of the expected utilities of Stackelberg equilibrium and the orthogonal case depending on $\lambda$. In this particular case, $\lambda_1=\lambda_2=\lambda$.}
\label{Fig:compauti}
%\end{center}
\end{figure}

In particular, Fig.~\ref{Fig:powerpro} shows the optimal power profile of the leading transmitter w.r.t. the channels gains $g_{11}$ and $g_{22}$ when $\lambda = 10^10$ bits/J$^2$. It is clear that for low values of $g_{11}$, the optimal policy is not to transmit. Then we distinguish two zones of interest:
\begin{itemize}
\item when both $g_{11}$ and $g_{22}$ are good, the transmitter uses most of its power for a relatively high value of $g_{11}$,
\item when only $g_{11}$ is good, we can see that the transmitter uses most of its power for a lower value of $g_{11}$ as it is not likely to facing interference from the following transmitter in this zone.
\end{itemize}

\begin{figure}
\begin{center}
%\hspace{-1cm}
\includegraphics[scale=0.45]{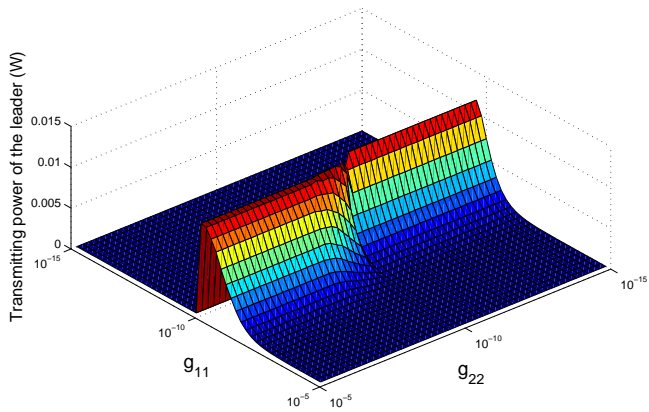}
\caption{Power profile of the leading transmitter w.r.t. $g_{11}$ and $g_{22}$ in the two-player Stackelberg case.}
\label{Fig:powerpro}
\end{center}
\end{figure}

\section{Conclusion and Perspectives}
In this paper, it is shown how a long-term energy constraint modifies
the behavior of a transmitter in terms of power control policy. In contrast with related
works such as~\cite{goodman-pc-2000}\cite{meshkati-jsac-2006}\cite{lasaulce-twc-2009}, a transmitter does
not always transmit when it is subject to such a constraint. This shows that when implementing its best power control policy, a primary link does not exploit all the available time-slots. The probability of having a free time-slot for the secondary link can be lower bounded in a reasonably tight manner and shown to be non-negligible in general. As a second step, a scenario where the secondary link can interfere on the primary link is analyzed. The problem is formulated as a Stackelberg game where the primary transmitter is the leader and the secondary transmitter is the follower. An equilibrium in this game is shown to exist for typical conditions on the efficiency function $f(x)$. Interestingly, the fact that the transmitters have a long-term energy constraint can make the system more efficient since this incites users to interfere less; indeed simulations show the existence of a value of an energy budget which maximizes the users's utilities. While the power control schemes at the equilibrium can be determined, the corresponding equations have a drawback: the power control scheme of a given user does not only rely on the knowledge of its individual channel gain but also on the other channel gains. This shows the relevance of improving the proposed work by designing more distributed power control policies. Additionally, the proposed scenarios included one primary link and one secondary link. When several cognitive transmitters are present, there is a competition between the secondary transmitters for exploiting the resources left by the primary link.

\newpage

\bibliographystyle{IEEEbib}
\bibliography{biblio-book-2011-03-13}

\end{document}